\documentclass[letterpaper, 10 pt, conference]{ieeeconf}  %

\IEEEoverridecommandlockouts                              %

\usepackage{amsmath, amsthm, amssymb,amsfonts,bm}
\usepackage{tikz}
\usepackage{hyperref}
\usepackage{cleveref}
\usetikzlibrary{arrows, decorations.markings}
\usepackage{subcaption}
\usepackage{booktabs}

\usepackage{microtype}
\usepackage{algorithm2e}
\RestyleAlgo{ruled}

\usepackage{siunitx}
\DeclareMathOperator*{\argmax}{arg\,max}
\DeclareMathOperator*{\argmin}{arg\,min}
\tikzstyle{vecArrow} = [thick, decoration={markings,mark=at position
   1 with {\arrow[semithick]{open triangle 60}}, mark=at position 0 with {\arrow[semithick]{Bar}}},
   double distance=2.5pt, shorten >= 5.5pt,
   preaction = {decorate},
   postaction = {draw,line width=1.7pt, white,shorten >= 4.5pt}]
\tikzstyle{innerWhite} = [semithick, white,line width=1.4pt, shorten >= 4.5pt]

\usepackage[table]{xcolor}

\newcommand{\R}{\mathbb{R}}
\newcommand{\bmat}[1]{\begin{bmatrix}#1\end{bmatrix}}
\newtheorem{proposition}{Proposition}

\title{\LARGE \bf
Simulating Arbitrage Optimization for Market Monitoring in Gas and Electricity Transmission Networks
}

\author{Noah Rhodes$^{1}$, Sachin Shivakumar$^{1}$, Luke S. Baker$^{2}$, Kaarthik Sundar$^{3}$, and Anatoly Zlotnik$^{4}$ %
\thanks{*This project was supported by the LDRD program and the Center for Nonlinear Studies at Los Alamos National Laboratory, as well as the U.S. DOE Office of Electricity Project TE1101000-05300-3123789: Fuel Reliability for Electric Energy Delivery by Optimized Management of Gas-pipeline Automation Systems (FREEDOM-GAS).  Research conducted at Los Alamos National Laboratory is done under the auspices of the National Nuclear Security Administration of the U.S. Department of Energy under Contract No. 89233218CNA000001. Report No. LA-UR-26-22589.}%
\thanks{$^{1}$Noah Rhodes and Sachin Shivakumar are with the Center for Nonlinear Studies at the Los Alamos National Laboratory, Los Alamos, NM 87545 USA {\tt\small \{nrhodes,sshivakumar\}@lanl.gov}}%
\thanks{$^{2}$Luke S. Baker is in the Accelerator Operations \& Technology Division at the Los Alamos National Laboratory, Los Alamos, NM 87545 USA {\tt\small lsbaker@lanl.gov}}%
\thanks{$^{3}$Kaarthik Sundar is in the Information Systems \& Modeling Group at the Los Alamos National Laboratory, Los Alamos, NM 87545 USA {\tt\small kaarthik@lanl.gov}}%
\thanks{$^{4}$Anatoly Zlotnik is in the Applied Mathematics \& Plasma Physics Group at the Los Alamos National Laboratory, Los Alamos, NM 87545 USA {\tt\small azlotnik@lanl.gov}}%
}

\allowdisplaybreaks

\begin{document}

\maketitle
\thispagestyle{empty}
\pagestyle{empty}

\begin{abstract}
We examine market outcomes in energy transport networks with a focus on gas-fired generators, which are producers in a wholesale electricity market and consumers in the natural gas market.  Market administrators monitor bids to determine whether a participant wields market power to manipulate the price of energy, reserves, or financial transmission rights. If economic or physical withholding of generation from the market is detected, mitigation is imposed by replacing excessive bids with reference level bids to prevent artificial supply shortages. We review market monitoring processes in the power grid, and present scenarios in small interpretable test networks to show how gas-fired generators can bid in the gas market to alter outcomes in a power market.  We develop a framework based on DC optimal power flow (OPF) and steady-state optimal gas flow (OGF) formulations to represent two interacting markets with structured exchange of price and quantity bids.  We formulate optimization-based methods to identify market power in a power grid, as well as to identify market conditions that indicate market power being exerted by a generator using gas market bids.    
\end{abstract}

\section{INTRODUCTION}

The rapid construction of gas-fired generators to compensate for the variability of renewable electricity and the retirement of coal-fired power plants has created interdependence between the physical flows in the power grid and gas pipelines as well as markets for wholesale electricity and natural gas
\cite{he2018coordination}.   This trend motivates methods for physical and financial coordination of power and gas delivery in day-ahead markets and in real-time \cite{guerra2021coordinated}, and the economic value of such coordination to energy users was shown \cite{craig2020valuing}.  Most gas-electric system coordination studies have treated gas-fired generators as passive elements that couple  power grids and gas pipelines through a heat-rate curve, which may be quadratic or even linear \cite{zlotnik2016coordinated,guerra2021coordinated}, with marginal costs that are fixed or adjusted according to a linear demand curve \cite{duenas2014gas}.

As the deregulation of energy markets worldwide has transformed natural monopolies into competitive markets \cite{bunn1997modelling}, the potential for strategic imbalances has grown.  Gas-fired generators are increasingly owned and operated as large fleets that participate as producers in  electricity markets and as consumers in gas and gas transportation markets.  Generation fleet managers thus have incentives and exercise rationality in maximizing their profits, which are tied to the spread between the price of electricity in the real-time market (RTM) and natural gas in the secondary release capacity market (SRCM) \cite{nandakumar2016computational}.  This motivates  advanced modeling of combined-cycle power plant operations and market participation \cite{leisen2024modeling}, as well as the nuanced effect of their roles on market equilibria and market design, particularly in the context of increased renewables in the generation mix \cite{levin2015electricity,guo2018market}.  Bi-level strategic bidding models have been proposed to examine the market behaviors of individual gas-fired units, where the upper level problem maximizes its profit and two lower-level problems model the electricity and natural gas markets \cite{jiang2021bi}.

Gas pipelines play a critical role in the energy systems in the U.S. and Europe 
\cite{mohlin2021us,hulshof2016market}.  The interdependence with power grids results in complex market dynamics \cite{jenkins2015dynamic}, in which participants in the natural gas market could potentially exert market power to influence outcomes in the electricity market \cite{spiecker2011analyzing}.  This dynamic was modeled using a mixed complementarity problem in the form of a Cournot-Nash equilibrium subject to some regularity assumptions, and where the outcome is highly influenced by the bidding behavior in the gas market \cite{spiecker2013modeling}.  The latter study focused on potential behaviors of gas producers and traders while considering gas-fired generators as passive components, and similar assumptions and models were used to show evidence for strong incentives for vertical integration of gas and electricity markets \cite{morey2025winners}.  In many regions, the markets for gas and gas transportation are decoupled in the regulatory framework, with the latter highly regulated.  Transportation markets typically include balancing rules, intended to maintain a balance in physical flows while providing hourly, daily, or even monthly flexibility and market liquidity \cite{creti2016natural,TGP_EntireTariff}.  Regulation of pipeline flexibility can create significant market distortions by reducing effective capacity \cite{keyaerts2011gas}, and can lead to complexity in operations.  Little guidance on gas market monitoring is available either from EU institutions or in academic literature  \cite{szolnoki2017monitoring}, and the need for improved monitoring of pipelines and gas markets in the U.S. has been emphasized.

Because independent system operators (ISOs) in the power sector administer integrated markets for power production and transmission, robust market monitoring frameworks and market power mitigation mechanisms are in place \cite{graf2021market}, including at CAISO \cite{CAISO_mm}, PJM \cite{pjm2025som}, NYISO \cite{NYISO2023SOMreport}, MISO \cite{MISO_MPB_Energy_Market}, and ERCOT \cite{ercot2025mitigation}.  Market monitoring methods were introduced to address often controversial and complex challenges following electricity market deregulation in many parts of the world in the 1990's \cite{goldman2004review,garcia2007international}. Regulators use statistical methods to assess the ability of a seller to profitably maintain prices above competitive levels for significant periods of time measured in years using economic or physical withholding \cite{rahimi2003effective}, in addition to other potential instances of market power exertion \cite{adib2008market,peterson2001best}.  In addition to anti-competitive practices, optimization-based energy markets can result in negative wholesale electricity prices \cite{sun2026negative}, and the empirical irregularities caused by short term pricing continue to be studied by researchers and addressed in practice \cite{barbour2014can,stiewe2025cross}.

In this study, we formalize market monitoring for anti-competitive wholesale electricity bidding in the DC OPF problem, and show how market power can be identified by examining the OPF problem in dual space.  We use the optimal gas flow (OGF) problem to represent an efficient gas market, and use examples for small networks to illustrate market power scenarios in coupled gas and electricity systems.  We formulate an optimal arbitrage problem for a gas-fired generator that participates in wholesale gas and electricity markets and seeks to maximize its profits. In contrast to previous studies on gas-electric network markets, we suppose that the collections of bids into these markets are not accessed by a single administrator so that the OGF and OPF problems cannot be combined in a single optimization formulation. Towards developing a monitoring framework to detect anomolous bidding, we iteratively solve OPF and OGF problems to identify a profit-maximizing bid structure which could be used to determine conduct thresholds.

The rest of the paper is as follows.  Section \ref{sec:market} reviews market monitoring and mitigation techniques based on recent academic studies and published industry standards.  In Section \ref{sec:arbitrage}, we describe the optimal profit problem for gas-fired generators that arbitrage across the so-called financial heat rate that connects electricity and gas transmission networks.  In Section \ref{sec:examples}, we examine market power exertion in electricity and coupled gas-electricity networks using scenarios for small, interpretable test systems.  We conclude with a discussion of policy implications in Section \ref{sec:conc}.

\section{MARKET MITIGATION}  \label{sec:market}

The primary functions of a market monitoring division at an electricity ISO are to examine \cite{CAISO_mm}:
\begin{itemize}
    \item potential anti-competitive behavior or inefficiencies;
    \item ineffective rules or practices and changes to improve wholesale competition and efficient outcomes;
    \item bid mitigation rules to remedy the potential exercise of market power; and
    \item effectiveness of the market in signaling for investment in generation, transmission, and demand response.
\end{itemize}
Effective market monitoring requires accessible and comprehensive data as well as institutional independence \cite{twomey2005review}.  We review the market mitigation rules enforced in Midcontinent Independent System Operator (MISO) below \cite{MISO_MPB_Energy_Market}.

\vspace{-1ex}
\subsection{Review of Market Mitigation Rules} \label{subsec:misorules}
Market mitigation is applied by an administrator when a participant is determined to be wielding market power to manipulate the price of energy, reserves, or financial transmission rights (FTRs).  Market power is seen to be wielded when the participant submits bids with artificially low or high prices that do not represent the marginal price, or bids with artificial physical values like ramp rate or minimum generation level.  These values represent economic or physical withholding of generation that create artificial supply shortages. Because market designs influence the strategies that participants use, market mitigation must be tailored to the specific short-term market design.

The purpose of mitigation is to prevent the exertion of market power, particularly in constrained regions of the network that are more exposed to manipulation due to fewer participants.
Mitigation measures are only applied in response to one or more binding transmission or reserve zone constraints.  If there are no binding network constraints, then it is assumed that many market participants are competing and no mitigation is required.  

Two steps are used to test if market power is being exerted. First, a Conduct Test is used to identify offers that exceed reference levels above a threshold. Next, an Impact Test determines what effect the offer had on the market clearing, and mitigation is imposed if the impact exceeds a threshold.  Market mitigation is the replacement of the bid with a reference level bid \cite{goldman2004review}.  The Conduct and Impact Test thresholds depend on the area of the network it occurs in.  Definitions of constrained areas (CAs) vary among ISOs, and can be pre-specified or dynamically defined by market conditions according to factors such as i) the long-term likelihood of binding transmission or reserve constraints over the past year; ii) the presence of a pivotal supplier in the region; iii) the likelihood of short-term binding transmission constraints over the past week; and iv) excessive Generator Shift Factors (GSFs), which are flow contributions that a change in generator production causes in identified transmission lines.  Various conduct tests are applied, including thresholds for physical withholding of resources, planning resources, and transmission lines; for economic withholding for planning resources, in reserve zones, and in constrained areas; and for uneconomic production.  We consider economic withholding in narrow and broad constrained areas.

\emph{Narrow Constrained Area (NCA)}: An NCA is an electrical area that \emph{has at least one Binding Transmission Constraint or Binding Reserve Constraint into or in a common electrical area in the ISO that is expected to be binding for more than 500 hours in a given 12 month period \textbf{and} if at least one supplier is pivotal in that electrical region}.  Because network constraints frequently bind in this region, market power is likely to be substantial.

\emph{Broad Constrained Areas (BCA)}: A BCA is an electrical area where sufficient competition exists such that a binding network constraint generally does not result in substantial market power. A BCA is not predefined area, and depends on system Generator Shift Factors (GSFs), which are the flow contributions that a change in generator production imposes on given transmission lines.  In a BCA, the Conduct Test compares the Constraint Generation Shift Factor Cutoff (CGSFC) of the constrained line to the value of a GSF, and if the differences exceeds a limit (ranging from \textpm 3\% to 6\% depending on voltage), then a resource is determined to have a significant effect on congestion.  A resource faces a conduct test in a BCA if \emph{there is a positive resource GSF that is greater than the BCA positive CGSFC \textbf{OR} A negative resource GSF that is less than the BCA negative CGSFC.}

\vspace{-1ex}
\subsection{Economic Withholding Conduct Test}  \label{subsec:conducttest}

A conduct test determines whether generator actions are the result of scarcity in the power system, or if they might be the result of market power exertion.

\emph{Economic withholding conduct thresholds in NCAs}:
All suppliers in an NCA are subject to a conduct test when a relevant network constraint is binding. The NCA conduct threshold is $\text{Con}_{NCA}= [\text{Net Annual Fixed Cost}]/[\text{Constrained Hours}]$,  where the Net Annual Fixed Cost is for new generation in \$/MW, and Constrained Hours is the total hours in the last 12 months that an interface to the NCA is binding, not exceeding 2000 hours.  As of June 6, 2025, these value range from \$22.05 to \$100.00 in MISO. %
A supplier fails the Conduct Test if an \emph{Energy offer or Minimum Generator offer is above the Reference Level by more than the NCA Threshold}.

\emph{Economic withholding conduct thresholds in BCAs}:
In a BCA, a generator fails a conduct test if an \emph{Energy Offer or Minimum Generation offer is a 300\% increase  or is 100\$/MWh above the Reference Level (offers below 25\$/MWh are not subject)}.

The comparison Reference Level is a benchmark for the Conduct Test. There are two methods to determine a Reference Level. 1) Offer-Based: The lower of the mean or the median of a Resource's accepted Energy Offers in competitive periods over the previous 90 days separately for On-Peak and Off-Peak  periods, and for Real-Time and Day-Ahead markets, adjusted for changes in spot fuel prices.  2) Cost-Based:  A consultation intended to reflect the Resource's marginal cost including risk premiums and opportunity costs.  A Cost-Based method is used when the resource is not dispatched frequently enough in the past 90 days to calculate an Offer-Based Reference Level. 

\vspace{-1ex}
\subsection{Economic Withholding Impact Test}  \label{subsec:impacttest}

If a resource fails a Conduct Test, an Impact Test checks if the offer that failed the conduct test affected the market outcome.   The market outcome with the generator offers is compared against a market outcome with Reference Level offers substituted for the offers that fail the Conduct Test. 

A generator fails the Impact Test in an NCA if \emph{ the LMP is increased by more than $\text{Imp}_{NCA}=[\text{Net Annual Fixed Cost}]/[\text{Constrained Hours}]$}. A generator fails the Impact Test in a BCA if \emph{the LMP is increased by 200\% or \$100/MWh, whichever is lower}.  If a generator fails the Impact Test, its offer is replaced with a Reference Level offer.   If a resource fails, all other resources from the owner in the same NCA or BCA are mitigated.  If the owner fails an impact test in the next 90 days in the same NCA or BCA, their resources are mitigated for the next 7 days. 

ISO-NE and NYISO use a similar ``conduct and impact'' approach, while CAISO and PJM use a ``structural'' approach with pre-determined system conditions and checks for congestion relief resources to test if a constraint is competitive \cite{graf2021market}.  ERCOT uses a unique hybrid approach \cite{graf2021market}.

\subsection{Formal Definition of Market Mitigation}   \label{subsec:formal}

For a network consisting of nodes $\mathcal{N}$ and lines $\mathcal{L}$, with generators $\mathcal{G}$ and buses $\mathcal{B}$, there is a solution to an optimal power flow problem $x^*$ that minimizes the cost of delivering power to the loads given the offer bids of generators.  Assume that some of the flows across lines $\mathcal{L}_b\subset\mathcal{L}$ are binding .  A subnetwork $\mathcal{N}_b\subset\mathcal{N}$ is a constrained area if an increase in load in this region can only be served by generators in this region.  The generators in this region are $\mathcal{G}_b$.

Each generator has a reference marginal cost $c^R_g$ and a bid cost $c^B_g$ that it submits to the market.  A conduct test is done for each generator in $\mathcal{G}_b$, which fails if its bid exceeds the reference by either \$100 or a factor of 4:
\begin{equation} \label{eq:conducttest}
CT \gets c^B_g \ge 4c^R_g \,\,\, || \,\,\, c^B_g \ge \$100 + c^R_g
\end{equation}
If a generator fails the conduct test, it undergoes an impact test. For each generator, its bid is replaced with the reference level and a market clearing problem is solved.  A generator fails the impact test if the original LMP $\lambda^B_g$ is 3 times the reference LMP $\lambda^R_g$, or \$100 higher than the reference LMP:
\begin{equation} \label{eq:impacttest}
 \!\! IT \gets \lambda^B_g \ge 3\lambda^R_g \,\,\, || \,\,\, \lambda^B_g \ge \$100 + \lambda^R_g
\end{equation}
If a generator fails the impact test, its bid price is replaced by the reference price.  This action is called mitigation, and is taken because these tests demonstrate the that the generator was successfully exerting market power to increase the price of power in the constrained region.

\section{OPTIMAL GENERATION PROFIT ARBITRAGE} \label{sec:arbitrage}
We next explore how a generator might optimize its bidding in gas and electricity markets to increase its profit.  We begin by introducing the DC Optimal Power Flow (OPF) and the Optimal Gas Flow (OGF) problems, which we use to represent efficient electrical and gas markets without accounting for time-dependent factors. Then we introduce the generator bidding problem, where the generator identifies a bidding strategy to maximize its profit in a way that does not trigger the electric market monitor rules.

\subsection{OPF: Optimal Power Flow}  \label{sec:opf}

The OPF problem minimizes electricity production cost while meeting demand \cite{caramanis2007optimal}.  The objective in \eqref{eq:opf_obj} is production cost, which is the sum of a generator bid price $c_g$ times the power it produces $\boldsymbol{P}_g$. In eq. \eqref{eq:opf_power_blance}, the power generated, consumed, and flow in and out of node  must sum to 0.
\begin{subequations} \label{opt:OPF}
\begin{align}
 \min\limits_{\boldsymbol{P}, \boldsymbol{\theta}} &\quad \sum_{g\in \mathcal{G}} c_{g}^P \boldsymbol{P}^G_{g} \label{eq:opf_obj}\\
 \text{s.t.,} \qquad&  \Omega\boldsymbol{P}^G + \Psi\boldsymbol{\theta} - P^D = 0 && \label{eq:opf_power_blance}\\ 
 & -\overline{T} \le \Phi\boldsymbol{\theta} \le \overline{T} &&  \label{eq:opf_thermal_limit}\\
 & \underline{P}^G \le \boldsymbol{P}^G \le \overline{P^G} &&  \label{eq:opf_gen_limit}
\end{align}
\end{subequations}
 $\Omega\in \{0,1\} ^{|\mathcal{B}|\times |\mathcal{G}|}$ is the generator-bus incidence matrix, $\Psi\in\R^{|\mathcal{B}|\times |\mathcal{B}|}$ is the network admittance matrix, $\Phi\in \{0,1
 \}^{|\mathcal{L}|\times |\mathcal{B}|}$ is the line-bus incidence matrix, $\boldsymbol{\theta}\in\R^{|\mathcal{B}|}$ is a vector of nodal voltage angles, and $P^D\in \R^{|\mathcal{B}|}$ is a vector of power demands, $\mathcal{B}$ is set of all buses, and $\mathcal{G}$ is set of all generators. Eq. \eqref{eq:opf_thermal_limit} and \eqref{eq:opf_gen_limit} constrain maximum line power to $\overline{T}\in \R^{|\mathcal{L}|}$ where $\mathcal{L}$ is set of all lines, and generator output to between a minimum of $\underline{P_g}$ and a maximum of $\overline{P_g}$.

Given the above OPF formulation, a simple linear program (LP) feasibility test can verify market power. Consider a subset of generators $\mathcal{G}^*\subset \mathcal{G}$ and solve a generator cost maximization problem subject to the OPF constraints:
\begin{align}
    \label{eq:OPF_gen}
    \max\limits_{(c^P_g, \underline{P}_g,\bar{P}_g)_{g\in\mathcal{G}^*}} \quad & \sum\limits_{g\in \mathcal{G}^*} c^P_g \boldsymbol{P}_g,\notag\\
    \text{s.t.,} \quad & (\boldsymbol{P},\boldsymbol{\theta})~\text{satisfy Eq. \eqref{eq:opf_power_blance}-\eqref{eq:opf_gen_limit}}.
\end{align}
Heuristically, maximizing the objective in Eq. \eqref{eq:OPF_gen} is equivalent to maximizing the objective in Eq. \eqref{eq:opf_obj}, assuming the parameters corresponding to generators in $\mathcal{G}\setminus \mathcal{G}^*$ are fixed. This leads to a $\max-\min$ optimization of the form
\begin{align}\label{eq:OPF_bilevel}
    \max_{x_c,x_l} \min_y \quad & \bmat{\bmat{\bar{x}^\top& x_c^\top}& 0} y,\notag\\
    \text{s.t.,} \quad & A_{eq} y =b_{eq}, \; A_{ineq} y \le B_{ineq} x_l,
\end{align}
where $\bar{x} \!=\! c^P_{g\in\mathcal{G}\setminus \mathcal{G}^*}$, $x_c \!=\! c^P_{g\in\mathcal{G}^*}$, $x_l\!=\!(I,\underline{P}_{g\in\mathcal{G}^*},\bar{P}_{g\in\mathcal{G}^*})$  and $y\!=\!(\boldsymbol{P}_{g\in\mathcal{G}\setminus\mathcal{G}^*},\boldsymbol{P}_{g\in\mathcal{G}^*}, \boldsymbol{\theta}_{i\in \mathcal{B}})$ are sets of generator decision variables and inner OPF variables, respectively. 
For appropriate network parameters,
let
\begin{align*}
    A_{eq} &= \bmat{\Omega& \Psi},&& b_{eq} = P^D,\\
    A_{ineq} &= \bmat{0&\Phi\\0&-\Phi\\I&0\\-I&0}, && B_{ineq} = \bmat{\overline{T}&0&0\\\overline{T}&0&0\\0&-I&0\\0&0&I}.
\end{align*}
Using optimality conditions of the inner optimization problem in Eq. \eqref{eq:OPF_bilevel} yields a single level formulation for Eq. \eqref{eq:OPF_bilevel}:
\begin{align}\label{eq:OPF_singlelevel_primal}
\max_{x_c,x_l,\lambda,\mu} \quad &b_{eq}^\top\lambda +x_l^\top B_{ineq}^\top \mu,\notag\\
\text{s.t.,} \quad & \bmat{\bar{x}\\x_c} + A_{eq}^\top \lambda +A_{ineq}^\top\mu = 0, \; \mu\ge 0.
\end{align}
Given bounds $x_l$ on generation limits, Eq. \eqref{eq:OPF_singlelevel_primal} is an LP and its dual is given by
\begin{align}\label{eq:OPF_singlelevel_dual}
    \min_y \quad & \bmat{\bmat{\bar{x}^\top&0}&0} y,\notag\\
    \text{s.t.,} \quad & A_{eq} y =b_{eq}, \; A_{ineq} y \le B_{ineq} x_l,\notag\\
    & \bmat{\bmat{0& I}&0} y =0.
\end{align}
In problem \eqref{eq:OPF_gen}, we suppose that generators can control their offer price and production limits.

\begin{proposition}\label{prop:marketPowerTest}
    For a given $x_l$, suppose Eq. \eqref{eq:OPF_singlelevel_primal} is feasible and Eq. \eqref{eq:OPF_singlelevel_dual} is infeasible. Then, the objective of Eq. \eqref{eq:OPF_bilevel} is unbounded above with respect to the decision variable $x_c$.
\end{proposition}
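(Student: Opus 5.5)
The plan is to reduce the statement to the standard LP duality fact that a feasible primal paired with an infeasible dual has an unbounded primal. I will then transfer that unboundedness to the max--min problem Eq.~\eqref{eq:OPF_bilevel} using weak duality of the inner OPF problem.

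\emph{Step 1: Eq.~\eqref{eq:OPF_singlelevel_primal} is unbounded.} For fixed $x_l$, I treat Eq.~\eqref{eq:OPF_singlelevel_primal} as an LP in the variables $(x_c,\lambda,\mu)$, with $x_c$ and $\lambda$ free and $\mu\ge 0$. I will write its Lagrangian dual explicitly.
\begin{itemize}
\item The equality constraints are indexed by the blocks of $y$. Let $y=(y_1,y_2,y_3)$, corresponding to the outputs of generators in $\mathcal{G}\setminus\mathcal{G}^*$, the outputs of generators in $\mathcal{G}^*$, and the angles $\boldsymbol{\theta}$.
\item Because $x_c$ is free and appears only in the $y_2$ block, dualizing it produces the constraint $y_2=0$. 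This is the constraint $\bmat{\bmat{0&I}&0}y=0$.
\item The free variable $\lambda$ produces $A_{eq}y=b_{eq}$, and $\mu\ge 0$ produces $A_{ineq}y\le B_{ineq}x_l$.
\item The dual objective is $\bar{x}^\top y_1$.
\end{itemize}
So the dual is exactly Eq.~\eqref{eq:OPF_singlelevel_dual}. Up to the sign convention $y\mapsto -y$ or a flip of the max/min, which I will fix by matching the stated forms, this is the standard primal--dual pair. By the LP duality theorem, if one problem is feasible and the other is infeasible, the feasible one is unbounded. Hence there exist feasible $(x_c^k,\lambda^k,\mu^k)$ with $b_{eq}^\top\lambda^k+x_l^\top B_{ineq}^\top\mu^k\to+\infty$.

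\emph{Step 2: transfer to Eq.~\eqref{eq:OPF_bilevel}.} Fix $x_c=x_c^k$. The pair $(\lambda^k,\mu^k)$ satisfies $\bmat{\bar{x}\\x_c^k}+A_{eq}^\top\lambda^k+A_{ineq}^\top\mu^k=0$ with $\mu^k\ge 0$, so it is feasible for the dual of the inner minimization. Take any $y$ feasible for the inner problem. Multiply the stationarity condition by $y$ and use $A_{eq}y=b_{eq}$, $A_{ineq}y\le B_{ineq}x_l$, and $\mu^k\ge0$. Accounting for the sign convention chosen in Step~1, this gives the weak duality bound
\begin{equation*}
\bmat{\bmat{\bar{x}^\top & (x_c^k)^\top}&0}y\;\ge\; b_{eq}^\top\lambda^k+x_l^\top B_{ineq}^\top\mu^k .
\end{equation*}
Taking the minimum over $y$, the inner value at $x_c^k$ is at least the right-hand side. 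If the inner problem is infeasible, its value is $+\infty$ by convention, which is even larger. Letting $k\to\infty$, $\sup_{x_c}\min_y(\cdot)=+\infty$. Only weak duality is needed here, so no constraint qualification on the inner OPF is required.

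\emph{Main obstacle.} I expect the main difficulty to be bookkeeping rather than substance. The paper writes Eq.~\eqref{eq:OPF_singlelevel_primal} as a maximization with a ``$+A^\top\lambda$'' stationarity convention. I must verify that its formal LP dual is exactly Eq.~\eqref{eq:OPF_singlelevel_dual}, and not its negation. I must also check that the zero-cost $\boldsymbol{\theta}$ block and the free $x_c$ block produce precisely the constraints displayed in Eq.~\eqref{eq:OPF_singlelevel_dual}.

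My first move is to redefine $\lambda\mapsto-\lambda$ and $\mu\mapsto-\mu$ where needed so that the pair matches the textbook form $\max\{c^\top z: Mz=d,\ z_+\ge0\}$ / $\min\{d^\top y: M^\top y\ge c \text{ on signed parts}\}$. Once this matching is done, Steps 1 and 2 are immediate.

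Interpretively, infeasibility of Eq.~\eqref{eq:OPF_singlelevel_dual} means the OPF cannot be served with the generators in $\mathcal{G}^*$ at zero output. They are therefore pivotal, and raising $x_c$ raises the clearing cost without bound.
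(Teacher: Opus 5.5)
Your proposal is correct and follows the same route as the paper: treat Eqs.~\eqref{eq:OPF_singlelevel_primal}--\eqref{eq:OPF_singlelevel_dual} as an LP primal--dual pair, so that primal feasibility plus dual infeasibility forces unboundedness, and then pass this to Eq.~\eqref{eq:OPF_bilevel}, where your weak-duality transfer is slightly more careful than the paper's bare claim that the two problems are equivalent (that claim implicitly uses strong duality for the inner OPF). The sign issue you flagged is real, and the fix is to negate the objective of Eq.~\eqref{eq:OPF_singlelevel_primal} to $-(b_{eq}^\top\lambda+x_l^\top B_{ineq}^\top\mu)$; your planned relabeling of $\lambda$, $\mu$, or $y$ cannot do this, since a change of variables does not change which LP is the dual, but negating the objective leaves the proposition's feasibility hypothesis untouched.
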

\begin{proof}
    The proof follows from the duality in LPs. Because Eqs. \eqref{eq:OPF_singlelevel_primal} and \eqref{eq:OPF_singlelevel_dual} form a primal-dual pair, infeasibility of Eq. \eqref{eq:OPF_singlelevel_dual} implies unboundedness of the primal objective if the primal is feasible. Because Eqs. \eqref{eq:OPF_bilevel} and \eqref{eq:OPF_singlelevel_primal} are equivalent for fixed $x_l$, it follows that the objective of Eq. \eqref{eq:OPF_bilevel} is  unbounded.
\end{proof}
\noindent Proposition \ref{prop:marketPowerTest} implies that, given a subset of generators $\mathcal{G}^*$ and generation limits for all members of $\mathcal{G}$, one can check whether members of $\mathcal{G}^*$ have market power (i.e., to set arbitrary cost of generation) or not with an LP feasibility test.

\vspace{-1ex}
\subsection{OGF: Optimal Gas Flow}  \label{sec:ogf}

Pipelines transport natural gas from suppliers to consumers.  In steady-state, the Weymouth equation relates the pressures at the ends of a pipe to the mass flow through the pipe \cite{rios2015optimization}.   The steady-state optimal gas flow (OGF) problem was developed to maximize economic utility for gas pipeline users subject to nonlinear physics and engineering constraints \cite{rudkevich2017locational}.  The controllable parameters of the OGF are any loads that can be curtailed, as well as the settings of gas compressors. 

Consider a pipeline as a directed graph with sets of physical nodes $\mathcal N_G$ and edges $\mathcal E_G$ that consist of pipes $\mathcal P_G$ and compressors $\mathcal C_G$. Each pipe and compressor connects two distinct nodes and has defined positive direction of mass flow. Edges are denoted as $(i,j) \in \mathcal E_G$ for a pipe or compressor directed from node $i \in \mathcal N_G$ to node $j \in \mathcal N_G$. The key outcomes for market participants $\mathcal M_G$ are injections and withdrawals that occur at some physical nodes $\mathcal N_G$. We suppose that the set of market nodes that are associated to a physical node $j \in \mathcal N_G$ is called $\partial_j\subset\mathcal M$.  For each market participant $m \in \mathcal M$, the maximum  supply and delivery nominations are $s_m^{\max}$ and $d_m^{\max}$. The actual quantities supplied $s_m$ and delivered $d_m$ are decision variables. The offer and bid prices $c_m^s$ and $c_m^d$ are parameters.  Decisions are defined by a nonlinear optimization problem.

We describe network flow by the Weymouth equation in Eq. \eqref{eq:gas_flow}, mass balances at nodes in Eq. \eqref{eq:flow_balance}, and the action of compressors in Eq. \eqref{eq:comp_action}.  
\begin{subequations} \label{eq:gas_equality}
    \begin{eqnarray}
 p_j^2 - p_i^2  = -\beta_{ij} |\varphi_{ij}|\varphi_{ij}, \quad \forall (i,j)\in\mathcal P_G \label{eq:gas_flow}\\
  \sum_{(ij)\in \mathcal E_G} \varphi_{ij}-\sum_{(jk)\in \mathcal E_G} \varphi_{jk} =\sum_{m\in \partial_j} \left( s_{m}-  d_{m} \right), \,\, \forall j\in\mathcal N, \label{eq:flow_balance} \\
  p_j = \alpha_{ij} p_i, \quad \forall (i,j)\in\mathcal C_G. \label{eq:comp_action}
\end{eqnarray}
\end{subequations}
Here $p_i$ is the gas pressure at a node $i\in\mathcal N_G$, $\varphi_{ij}$ is the mass flow on pipe $(i,j)\in\mathcal E$, $s_m$ and $d_m$ are injections or withdrawals from the network at market nodes $m\in\partial_j$, and $\alpha_{ij}$ is the pressure boost of a gas compressor $(i,j)\in\mathcal C_E$.   In the Weymouth equation \eqref{eq:gas_flow}, the parameter $\beta_{ij}$ is called the \emph{resistance} and is provided for each pipe by $\beta=a^2\lambda L/ (A^2 D)$, where $a$ denotes the local wave speed in the gas, $\lambda$ is the Darcy-Weisbach friction factor, and $L$, $D$, and $A$ denote the length, diameter, and cross-sectional area of the pipe, respectively. We suppose $a$ is constant and uniform, while each pipe $(i,j)\in\mathcal{E}_g$ has parameters $\lambda_{ij}$, $L_{ij}$, $D_{ij}$, and $A_{ij}$.  Engineering limits and nomination bounds have the form
\begin{subequations} \label{eq:gas_inequality}
    \begin{align}
1 &\le \alpha_{ij} \le \alpha_{ij}^{\max}, &(i,j)&\in \mathcal C_G, \label{eq:comprs_inequality} \\
        p_j^{\min}  & \le p_j \le p_j^{\max}, &j&\in \mathcal N_G, \label{eq:pres_inequality} \\
      0  & \le  s_{m} \le  s_{m}^{\max}, &m&\in \mathcal M_G, \label{eq:supply_inequality} \\
        0 & \le  d_{m} \le  d_{m}^{\max}, &m&\in \mathcal M_G. \label{eq:demand_inequality}
\end{align}
\end{subequations}
Equation \eqref{eq:comprs_inequality} constrains the compressor to a maximum pressure boost, Eq. \eqref{eq:pres_inequality} constrains the pressure to operating bounds, and Eqs. \eqref{eq:supply_inequality}-\eqref{eq:demand_inequality} constrain the injections and withdrawals to within $s_m^{\max}$ and $d_m^{\max}$.
The objective  is 
\begin{equation} \label{eq:gas_objective}
 J_G\!=\!\!\sum_{m\in \mathcal M}\!\!\! \left(c^d_{m}  d_{m} \!-\! c^s_{m}  s_{m} \right) - \!\!\!\sum_{(i,j)\in \mathcal C_G} c^C_{ij}\varphi_{ij} \left(\alpha_{ij}^r\!\!-\!1 \right)\!,
\end{equation}
where $r$ is related to the adiabatic compression equation \cite{rios2015optimization}.
The first summation represents total market surplus, and the second summation represents the energy used to power compressors. The OGF problem is given as
\begin{subequations}
\begin{align}
 \min_{p,\varphi,\alpha,s,d} \quad & J_G \text{ in } \eqref{eq:gas_objective}  \label{eq:gas_obj}\\
 \text{s.t.,} \quad & \text{gas network flow } \eqref{eq:gas_equality} \\
  & \text{gas network inequality } \eqref{eq:gas_inequality}.
\end{align}
\end{subequations}
Unlike the OPF in Eq. \eqref{opt:OPF}, the OGF is non-convex for which simple market-power tests like Prop. \ref{prop:marketPowerTest} are not available. Hence, we seek a heuristic to couple the OPF and OGF problems in a unified generator price arbitrage (GPA) problem. Next, we formulate a bilevel optimization formulation of GPA and then discuss a possible iterative approach for solving it.

\subsection{GPA: Generator Price Arbitrage}  \label{sec:gpb}

In the \emph{Generator Price Arbitrage} problem, a fleet of gas-fired generators, indexed by $g\in\mathcal{G}^*$, place price-quantity bids into power and gas markets. The OPF and OGF are treated as black-boxes from the generator's perspective, sending back only the cleared price-quantity pairs (Fig. \ref{fig:coupled_market}). 

Because the generators search for optimal bids to maximize revenue, the decision variables are price-quantity bids into the two markets, while the objective is the revenue of power sold offset by cost of gas procured.  The GPA is then
\begin{align}\label{opt:GPA}
\max\limits_{ (\bar{c}^P_{i}, \bar{\boldsymbol{P}}_i, \bar{c}^G_{i}, \bar{\boldsymbol{q}}_i)_{i\in\mathcal{G}^*}} &\quad\sum\limits_{i\in \mathcal{G}^*} \left(c^P_{i}\boldsymbol{P}_{i} - c^G_{i} \boldsymbol{q}_i \right),\\
\text{s.t.,}\qquad& (c^P_{\mathcal{G}^*}, \boldsymbol{P}_{\mathcal{G}^*}) = \argmin \,\mathrm{OPF}(\bar c^P_{\mathcal{G}^*}, \bar{\boldsymbol{P}}_{\mathcal{G}^*}; \delta_1),\notag\\
& (c^G_{\mathcal{G}^*}, \boldsymbol{q}_{\mathcal{G}^*}) = \argmax \, \mathrm{OGF}(\bar c^G_{\mathcal{G}^*}, \bar{\boldsymbol{q}}_{\mathcal{G}^*}; \delta_2),\notag\\
& \bar{\boldsymbol{P}}_{\mathcal{G}^*} = h(\boldsymbol{q}_{\mathcal{G}^*}),\, \bar{c}^P_{\mathcal{G}^*}\in\mathcal{F}^P_{CT}, \, c^P_{\mathcal{G}^*}\in\mathcal{F}^P_{IT}, \notag
\end{align}
where price-quantity bids into the electricity market $(\bar{c}^P_{\mathcal{G}^*}, \bar{\boldsymbol{P}}_{\mathcal{G}^*})$ and gas market $(\bar{c}^G_{\mathcal{G}^*}, \bar{\boldsymbol{q}}_{\mathcal{G}^*})$ are decision variables, $\delta_i$ denote market parameters unknown to members of $\mathcal{G}^*$, production capacity bids $\bar{\boldsymbol{P}}_{\mathcal{G}^*}$ are limited by pass-through heat-rate curve $h(\boldsymbol{q}_{\mathcal{G}^*})$, and $\mathcal{F}^P_{CT}$ and $\mathcal{F}^P_{IT}$ represent bids that do not trigger the impact or conduct tests in equations \eqref{eq:conducttest}-\eqref{eq:impacttest}. 
Solving the OPF and OGF yields cleared price-quantity pairs $(c^P_{\mathcal{G}^*}, \boldsymbol{P}_{\mathcal{G}^*})$ and $(c^G_{\mathcal{G}^*}, \boldsymbol{q}_{\mathcal{G}^*})$.

The GPA problem \eqref{opt:GPA} is ill-posed due to incomplete information and optimization complexity, and hence a closed-form solution is unavailable. Here we develop an iterative heuristic to obtain feasible local candidate solutions.

\begin{figure}[!ht]
\centering
\vspace{-2ex}
\begin{tikzpicture}

\draw[vecArrow] (0.5,1.75) -- node[midway, above] {$c^P_{\mathcal{G}^*}, \boldsymbol{P}_{\mathcal{G}^*}$} (2.25,1.75);
\draw[vecArrow] (2.25,0.25) -- node[midway, below] {$\bar c^P_{\mathcal{G}^*}, \bar{\boldsymbol{P}}_{\mathcal{G}^*}$} (0.5,0.25);
\draw[vecArrow] (3.75,1.75) -- node[midway, above] {$\bar c^G_{\mathcal{G}^*}, \bar{\boldsymbol{q}}_{\mathcal{G}^*}$} (5.5,1.75);
\draw[vecArrow] (5.5,0.25) -- node[midway, below] {$c^G_{\mathcal{G}^*}, \boldsymbol{q}_{\mathcal{G}^*}$} (3.75,0.25);

\draw (-0.5,0) rectangle (0.5,2);

\draw (2.25,0) rectangle (3.75,2);

\draw (5.5,0) rectangle (6.5,2);

\node (OPF) at (0.,1) {OPF};
\node (Gen) at (3,1.25) {Generator};
\node at (3,0.75) {$\{i\in \mathcal{G}^*\}$};
\node (OGF) at (6,1) {OGF};

\end{tikzpicture}
\caption{\small Diagram gas-fired generator bids in the OPF and OGF.} 
\label{fig:coupled_market}
\end{figure}
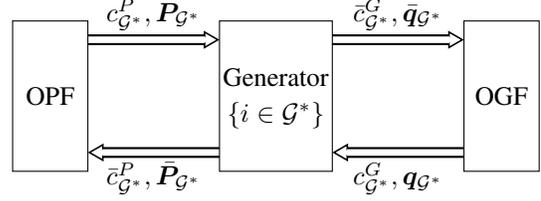

\vspace{-3ex}
\subsection{Iterative Bidding Strategy} \label{subsec:iterativebid}

For the iterative scheme, we assume that the electricity and gas markets are cleared at different times so generators can use additional information from the dispatch of one market to update the bids into the other.
Algorithmically, this translates to the following steps: For the $0$th iteration, initial bids for the electricity market are based on some reference $(\bar c_{\mathcal{G}^*}^{P,0}, \bar{\boldsymbol{P}}_{\mathcal{G}^*}^0)$.
\begin{enumerate}
    \item Given $(\bar c_{\mathcal{G}^*}^{P,k}, \bar{\boldsymbol{P}}_{\mathcal{G}^*}^k)$ solve the OPF problem to obtain cleared price-quantities: $(c_{\mathcal{G}^*}^{P,k}, {\boldsymbol{P}}_{\mathcal{G}^*}^k)$
    \item Using $(c_{\mathcal{G}^*}^{P,k}, {\boldsymbol{P}}_{\mathcal{G}^*}^k)$, set $(\bar c_{\mathcal{G}^*}^{G,k}, \bar{\boldsymbol{q}}_{\mathcal{G}^*}^k)$ such that $\bar{\boldsymbol{q}}_{\mathcal{G}^*}^k = h^{-1}({\boldsymbol{P}}_{\mathcal{G}^*}^k)$ and $\sum_{i\in \mathcal{G}^*} \left(c_{i}^{P,k}\boldsymbol{P}_i^k - \bar c_{i}^{G,k} \bar{\boldsymbol{q}}_i^k \right)\ge 0$.
    \item Solve the OGF using bids $(\bar c_{\mathcal{G}^*}^{G,k}, \bar{\boldsymbol{q}}_{\mathcal{G}^*}^k)$ to obtain cleared price-quantities $(c_{\mathcal{G}^*}^{G,k}, {\boldsymbol{q}}_{\mathcal{G}^*}^k)$
    \item If $\boldsymbol{q}^k< \bar{\boldsymbol{q}}^k$, reduce $\bar{\boldsymbol{P}}^{k+1}$ and increase $\bar{c}^{P,k+1}$.
\end{enumerate}
The above scheme requires additional constraints on the change in price-quantity bids, $\Delta=(\delta c^{P,k}, \delta \boldsymbol{P}^k, \delta c^{G,k}, \delta \boldsymbol{q}^k)$. The actual strategy to choose $\Delta$ can vary. For example, a ``brute-force search'' would evaluate the profit for all possible $\Delta$ and pick the bid with the highest profit. Below, we employ a \emph{patternsearch} algorithm emulating this brute-force approach. 

\section{MARKET POWER EXAMPLES} \label{sec:examples}

\subsection{Electric Network Market Power}
Market power can be shown in the 3-node power network in Fig. \ref{fig:electic_network}. 
The branch impedance of each line is $2.81\cdot 10^{-4}$.  The load, branch thermal limit, max generator power, and generator cost varies across examples and is always labeled.  In Fig. \ref{fig:electic_network}, the load at bus 1 is higher than the power that can flow between bus 1 and  bus 3, where the lowest-cost generator sits.  Due to congestion on the branch, a higher cost generator at node 3 is used to deliver power. This congestion causes the LMPs to differ between nodes.  However, the existence of this congestion does not imply market power.

\begin{figure}[t]
    \centering
    \includegraphics[width=0.8\linewidth]{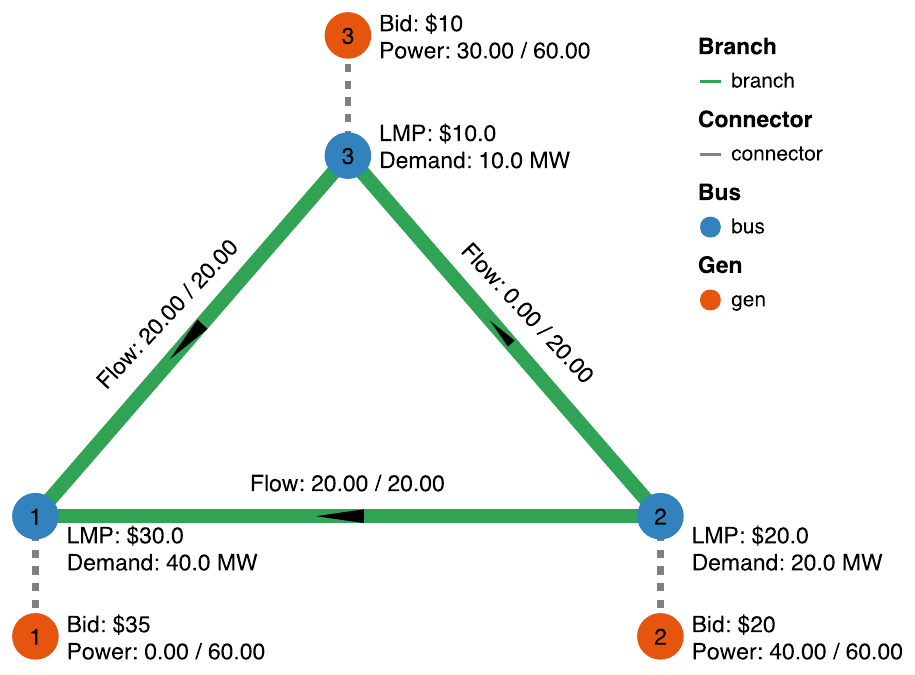}
    \caption{\small Power grid network with congestion.  The LMP at each node is set by the bid price of a generator at the node.  Each generator could bid higher to increase profit, but could not triple their bid or increase the bid by \$100 and still be cleared in the market.  Gen 1 and 2 do not have market power despite being in a congested region.}
    \label{fig:electic_network}
\end{figure}

Table \ref{tab:electric_scenarios} shows five additional scenarios on the three bus electric network, with small changes made to the network relative to the network shown in Fig. \ref{fig:electic_network}.  The first row shows that if generator 1 increased its bid price to \$100, there would be no impact on the LMPs because they can be served by a combination of generator 1 and 2, where the LMP at node 1 is \$30 because the more expensive generator at node 2 must run rather than the cheaper one at node 3.  If the load at node 1 is increased to 50 MW, then generator 1 has market power and can set the price, up to the conduct threshold.

\begin{table}[t]
\vspace{-2ex}
\caption{Market Power Scenarios} \label{tab:electric_scenarios}
\vspace{-2ex}
\begin{tabular}{@{}llll@{}}
\toprule
\multicolumn{1}{c}{\textbf{Network Changes}} & \textbf{LMP 1} & \textbf{LMP 2} & \textbf{LMP 3} \\ \midrule
Increase gen 1 price bid to \$100 & \$30 & \$20 & \$10 \\
Increase gen 1 bid and load 1 to 50 MW & \$100 & \$20 & \$10 \\
Increase all line limits to 30 & \$20 & \$20 & \$20 \\
Remove gen 2 & \$30 & \$50 & \$10 \\
Set line limit (1,2) to 10, load 2 to 5 MW & \$30 & \$-10 & \$10 \\ \bottomrule
\end{tabular}
\vspace{1ex}
\end{table}

The third scenario from Table \ref{tab:electric_scenarios} shows that if all line limits are increased, the congestion is removed.  Scenario four shows that if generator 2 is removed, the LMP at 2 can become higher than any generator bid price.  Scenario five shows that if flow on the line connecting nodes 1 and 2 is reduced and the load at 2 is reduced, the LMP at 2 can become negative, even when all bids are positive.   These scenarios demonstrate that congestion alone does not indicate market power, and that increased (or negative) LMP values can be caused by congestion without any generator changing their bid prices.  This is key to testing if a gas-fired generator is exerting market power across gas and electric networks.

\vspace{-1ex}
\subsection{Congestion in Coupled Gas and Electricity Networks }

\begin{table}[t]
\centering
\caption{Pipe Data} \label{tab:pipes}
\vspace{-2ex}
\begin{tabular}{@{}cccc@{}}
\toprule
Pipe & Diameter (meters) & Length (meters) & Friction Factor \\ \midrule
$2\rightarrow3$ & 0.6 & 5,000 & 0.025  \\
$4\rightarrow5$ & 0 & 8,000 & 0.025 \\ \bottomrule
\end{tabular}
\end{table}

\begin{table}[t]
\centering
\caption{Compressor Data} \label{tab:compressor}
\vspace{-2ex}
\begin{tabular}{@{}ccc@{}}
\toprule
Compressor & $\alpha^{max}$ & $c^C_{ij}$\\ \midrule
$1\rightarrow2$ & 1.4 & 10 \\
$3\rightarrow4$ & 1.4 & 10 \\ \bottomrule
\end{tabular}
\end{table}

Next, we consider a 5-node gas network coupled to the electrical network in Figure \ref{fig:electic_network}, as shown in Figure \ref{fig:normal_congestion}.  This network has two compressors and two pipes, whose properties are shown in Tables \ref{tab:pipes} and \ref{tab:compressor}.  The minimum and maximum pressure bounds at the nodes of the network are 3 MPa and 6 MPa. There is one gas supply point, and two locations where gas is delivered.   Delivery points 1 and 2 are generators 1 and 2 in the electric network, respectively.  We consider these generators to be owned by the same company. The price-quantity bids are shown in the figure.  We consider a linear heat-rate of $h^{-1}=7.5 ~\textrm{MMBtu/MWh}$ and a mass-to-energy conversion rate of $0.0556 ~\textrm{MMBtu/kg}$.

\begin{figure}[!t]
    \centering
        \includegraphics[width=.8\linewidth]{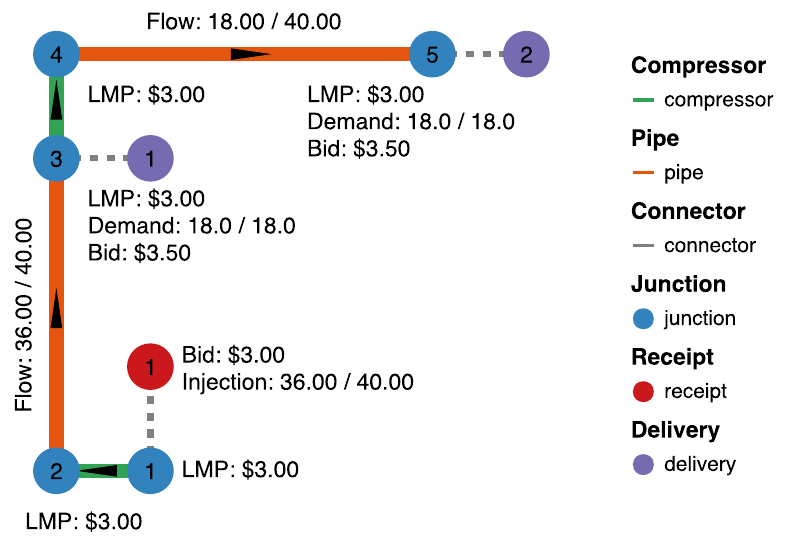}
        \includegraphics[width=.8\linewidth]{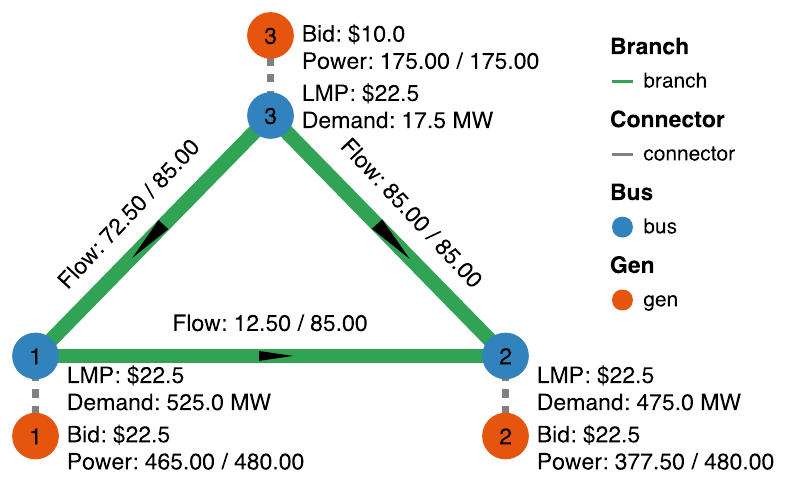}
    \caption{\small Gas (top) and power (bottom) networks.  Generator 2 is marginal, and produces at less than maximum capacity.}
    \label{fig:normal_congestion}
\end{figure}

Figure \ref{fig:normal_congestion} shows the networks with no congestion.  In the gas network, the generator at Delivery node 1 and 2 receive their full gas nomination. In the electrical network, generator 1 and 2 are both marginal, and produce less than their maximum power.  We use a pass-through electric bid price for simplicity, resulting in generators 1 and 2 making a small loss by selling slightly less electricity than they could generate from the gas they bought, shown in the first row of Table \ref{tab:small_gen_profits}.  In reality, the bid prices here would be higher to compensate for the costs of operating the generators.  These conditions represent normal operations. 

\begin{table}[!t]
    \centering
    \vspace{-1ex}
    \caption{Generator Profits}
    \label{tab:small_gen_profits}
    \vspace{-2ex}
    \begin{tabular}{@{}lrrr@{}}
    \textbf{Scenario} & \textbf{Generator 1} & \textbf{Generator 2} & \textbf{Total Profit} \\ \midrule
    Normal Operations & \$ -337.55 & \$ -2,306.3 & \$ -2,643.85 \\
    Collusion & \$ -5,775.06 & \$ 6,066.67 & \$ 291.61 \\
    Collusion and $c_{g_1}$ = \$80 & \$ 19,666.61 & \$ 46,200.00 & \$ 65,866.61
    \end{tabular}
    \vspace{1ex}
\end{table}

However, the generators can gain more profit. Figure \ref{fig:enhanced_congestion} shows the  networks with Delivery node 1 increasing its bid gas from 3.5 \$/MMBtu to 3.6 \$/MMBtu and quantity for gas from 19 MMBtu to 26 MMBtu, limiting gas flow to delivery node 2 (which gets only 14 of 19 MMBtu requested).  This reduces generation capacity at node 2, and increases the LMP at node 2 due to electrical network congestion. This results in much higher total profit for generators 1 and 2, despite buying extra gas for generator 1 that is not used to produce electricity. At the same time, the LMP at node 3 drops from \$26.2 to \$10, reducing profit for a competing generator.

\begin{figure}[t]
    \centering
        \includegraphics[width=.8\linewidth]{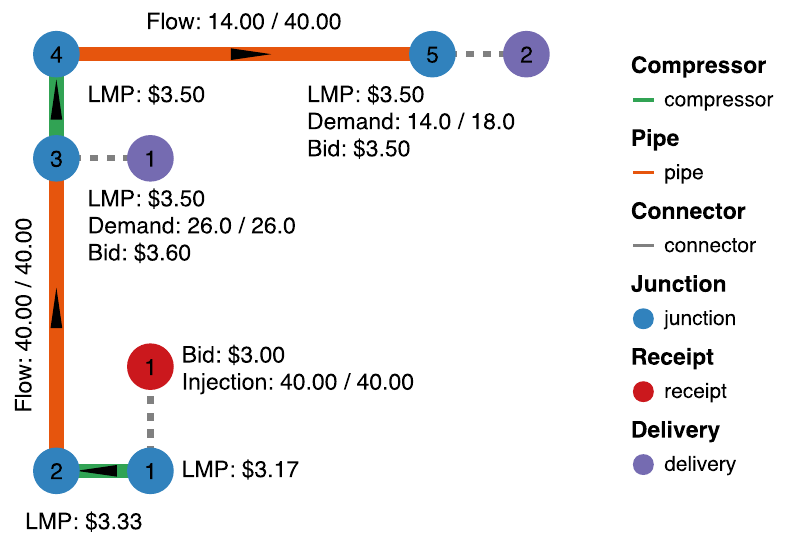}
        \includegraphics[width=.8\linewidth]{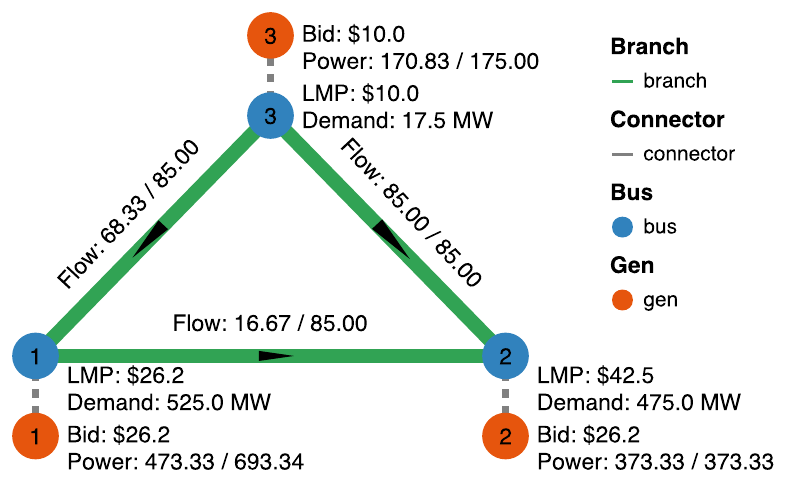}
    \label{fig:gaselec_2} 
    \vspace{1ex}
    \caption{\small Bidding a higher price and quantity in the gas network, generator 1 has more capacity in the electrical network, but the capacity at generator 2 is reduced. This increases the LMP at node 2 due to electrical network congestion, without generator 2 changing its price bid.}
    \label{fig:enhanced_congestion}
\end{figure}

Here generator 1 has market power and can increase its electric bid price to drive higher profits at generator 2. Due to congestion, the LMP at node 2 can be increased above market monitoring thresholds.  If the reference bid (average cleared bid in the last 90 days) is \$22.5, then generator 1 could bid up to \$90 without triggering a conduct test.

If generator 1 increases its electric bid price to \$80, the LMP at node 1 increases to \$80, the LMP at node 3 stays at \$10, and the LMP at node 2 becomes \$150.  The LMP at node 2 is higher than a bid that would otherwise trigger a conduct test, due to congestion in the electrical network due to gas network congestion.  Because no bid is high enough to trigger a conduct test, no mitigation is applied.  A generator can adjust gas bids to cause congestion in the power grid while avoiding a market power violation. 

Rather than constructing the bidding scenarios, similar insights can be obtained by solving the \emph{GPA} optimization formulation in Eq. \eqref{opt:GPA}. As shown in Tab. \ref{tab:3bus5node}, we see that single generator profit yields substantially lower LMPs (Scenarios (A) and (B) in Tab. \ref{tab:3bus5node}) in comparison to coordinated bidding in gas-market (Scenarios (C) and (D)) even though bid prices hit the chosen upper-bound of $\$100/\rm{MWh}$. 
This example is shown on two small network,  next we solve the GPA problem \eqref{opt:GPA} for larger networks.
\begin{table}[!t]
    \centering
    \vspace{1ex}
    \caption{Optimal bids $(\bar{\boldsymbol{P}}, \bar{c}^P; \bar{\boldsymbol{q}}, \bar{c}^G)$  for the GPA problem \eqref{opt:GPA} for various generator sets. Scenario (D) is same as (C) with different initial guess.
    }
    \label{tab:3bus5node}
    \vspace{1ex}
    \begin{tabular}{@{}l|rrr|r@{}}
         \textbf{Scenario} &\textbf{Gen. 1} & \textbf{Gen. }2 & \textbf{Gen. 3 }&\textbf{Profit}\\
         \midrule
         \textbf{(A)} Bids &$\bmat{355&\$100\\13.3&\$5}$ & & &\\          %
         \qquad \textit{LMPs} &$\$100$&$\$20$&$\$10$& $\$27.5K$\\\midrule
         \textbf{(B)} Bids & &$\bmat{305&\$100\\11.4&\$4}$ & &\\          %
         \qquad \textit{LMPs}&$\$30$&$\$100$&$\$10$&$\$23.6K$\\\midrule
         \textbf{(C)} Bids &$\bmat{694&\$100\\26&\$3.6}$ & $\bmat{374&\$26.2\\23&\$3}$& &\\          %
         \qquad \textit{LMPs}&$\$100$&$\$190$&$\$10$&$\$94.2K$\\\midrule
         \textbf{(D)} Bids &$\bmat{381&\$100\\14.3&\$3.5}$ & $\bmat{507&\$100\\19&\$3.5}$& &\\          %
         \qquad \textit{LMPs}&$\$190$&$\$100$&$\$10$&$\$103.2K$
    \end{tabular}
    \vspace{3ex}
\end{table}

\begin{figure}[!b]
    \centering
    \includegraphics[width=0.75\linewidth]{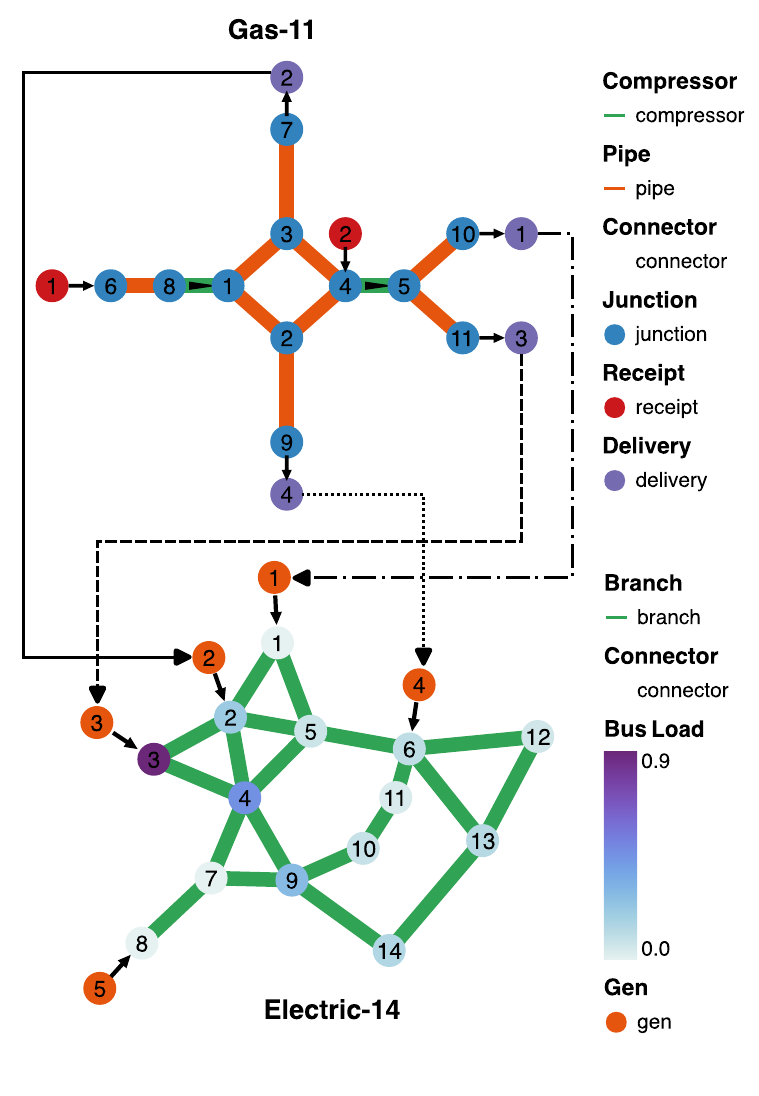}
    \vspace{-2ex}
    \caption{\small Coupling of IEEE-14 and GasLib-11 cases. Arrows indicate generators connections to gas and power nodes.}
    \label{fig:network14-11}
\end{figure}

\subsection{IEEE-14 and GasLib-11 Coupled System}

Consider the Gaslib-11 network \cite{schmidt2017gaslib} coupled with \emph{IEEE14} of MATPOWER \cite{zimmerman1997matpower} where the 4 gas-fired generators at buses 1, 2, 3, and 6 act as buyers in the gas network at demand nodes 10, 7, 11, and 9, respectively. See Fig. \ref{fig:network14-11}.  We modify the IEEE-14 case by changing maximum generation capacity and  cost (MWh, \$/MWh) at nodes 1, 2, 3, 6, and 8 to $(200,40)$, $(140,20)$, $(100,25)$, $(100,30)$, and $(100,15)$, respectively.  The GasLib-11 case is modified as follows: (a) horizontal pipes, 0.01 friction factors, diameter 0.3m; and (b) $s_m^{\max}$ at nodes 6 and 4 are 10 \textrm{kg/s} and 8 \textrm{kg/s}, respectively.

Solving the GPA problem \eqref{opt:GPA} yields bidding strategies that are difficult to interpret as anti-competitive due to the complexity of the network interactions, and this becomes even harder for larger networks (see Tab. \ref{tab:14bus11node}).  While identifying anti-competitive bidding is non-trivial, nonetheless, we observe that LMPs are significantly influenced by these strategies as seen in Tab. \ref{tab:14bus11node}, where the baseline case has no coordinated bidding whereas Scenarios (A) and (B) depict coordination between generators $(2,3)$ and $(2,4)$, respectively. Whether a subset of generators is colluding is significant and appropriate monitoring mechanisms are necessary to limit monopolistic market power.

\begin{table}[t]
    \centering
    \caption{Optimal bids for IEEE-14 \& GasLib-11 systems}
    \label{tab:14bus11node}
    \begin{tabular}{@{}l|rrrr@{}}
       \textbf{Scenario} &\textbf{Gen. 1} & \textbf{Gen. }2 & \textbf{Gen. 3 }&\textbf{Gen. 4}\\
         \midrule
         \textbf{Baseline}&&&&\\
         \quad \textit{LMPs}&$\$25$&$\$25$&$\$25$&$\$25$\\
         \midrule
         \textbf{(A)} Bids &&$\bmat{59\!\!&\$20\\2.2\!\!&\$2.75}$ &$\bmat{100\!\!&\$25\\1.2\!\!&\$3}$ & \\     %
         \quad \textit{LMPs} &$\$30$&$\$30$&$\$30$& $\$30$\\\midrule
         \textbf{(B)} Bids &&$\bmat{22\!\!&\$20\\0.8\!\!&\$2.75}$ &&$\bmat{37\!\!&\$30\\1.4\!\!&\$3.25}$ \\ %
         \quad \textit{LMPs} &$\$40$&$\$40$&$\$40$& $\$40$\\\midrule
    \end{tabular}
    \vspace{2ex}
\end{table}

\section{CONCLUSIONS}  \label{sec:conc}

We have shown how a gas-fired generator can manipulate electricity markets by adjusting its bid into natural gas markets while avoiding detection by power system market monitoring.  We define market monitoring for anti-competitive behavior in DC-OPF cleared power grid markets, present a linear-programming method for detecting market power by a set of generators, and develop an optimization model for a generator to maximize their profit through bids into gas and electric markets.  Using small, interpretable networks, we show how market power can be used to increase generator profits across gas and power systems. Generators can collude to modify bids in the gas market to profit from congestion and high LMPs in the electrical market that would otherwise be mitigated due to monitoring.  A heuristic is tested for a larger network to find profit maximizing bids.  Future work will extend the market power certificate in Proposition \ref{prop:marketPowerTest} to the gas network case, to enable market monitoring across gas and power systems.

\bibliographystyle{unsrt}
\bibliography{references}

\begin{thebibliography}{10}

\bibitem{he2018coordination}
Chuan He, Xiaping Zhang, Tianqi Liu, Lei Wu, and Mohammad Shahidehpour.
\newblock Coordination of interdependent electricity grid and natural gas network—a review.
\newblock {\em Current Sustainable/Renewable Energy Reports}, 5(1):23--36, 2018.

\bibitem{guerra2021coordinated}
Omar~J. Guerra et~al.
\newblock Coordinated operation of electricity and natural gas systems from day-ahead to real-time markets.
\newblock {\em Journal of cleaner production}, 281:124759, 2021.

\bibitem{craig2020valuing}
Michael Craig et~al.
\newblock Valuing intra-day coordination of electric power and natural gas system operations.
\newblock {\em Energy Policy}, 141:111470, 2020.

\bibitem{zlotnik2016coordinated}
Anatoly Zlotnik, Line Roald, et~al.
\newblock Coordinated scheduling for interdependent electric power and natural gas infrastructures.
\newblock {\em IEEE Trans. on Power Sys.}, 32(1):600--610, 2016.

\bibitem{duenas2014gas}
Pablo Due{\~n}as et~al.
\newblock Gas--electricity coordination in competitive markets under renewable energy uncertainty.
\newblock {\em IEEE Trans. Power Systems}, 30(1):123--131, 2014.

\bibitem{bunn1997modelling}
Derek~W. Bunn, Isaac Dyner, et~al.
\newblock Modelling latent market power across gas and electricity markets.
\newblock {\em System Dynamics Review: The Journal of the System Dynamics Society}, 13(4):271--288, 1997.

\bibitem{nandakumar2016computational}
Neha Nandakumar.
\newblock {\em Computational models of natural gas markets for gas-fired generators}.
\newblock PhD thesis, Massachusetts Inst. Tech., 2016.

\bibitem{leisen2024modeling}
Robin Leisen et~al.
\newblock Modeling combined-cycle power plants in a detailed electricity market model.
\newblock {\em Energy}, 298:131246, 2024.

\bibitem{levin2015electricity}
Todd Levin and Audun Botterud.
\newblock Electricity market design for generator revenue sufficiency with increased variable generation.
\newblock {\em Energy Policy}, 87:392--406, 2015.

\bibitem{guo2018market}
Hongye Guo et~al.
\newblock Market equilibrium analysis with high penetration of renewables and gas-fired generation: An empirical case of the {Beijing-Tianjin-Tangshan} power system.
\newblock {\em Applied Energy}, 227:384--392, 2018.

\bibitem{jiang2021bi}
Tao Jiang et~al.
\newblock Bi-level strategic bidding model of gas-fired units in interdependent electricity and natural gas markets.
\newblock {\em IEEE Trans. on Sustainable Energy}, 13(1):328--340, 2021.

\bibitem{mohlin2021us}
Kristina Mohlin.
\newblock The {US} gas pipeline transportation market: An introductory guide with research questions for the energy transition.
\newblock {\em Environmental Defense Fund EDP}, pages 21--01, 2021.

\bibitem{hulshof2016market}
Daan Hulshof et~al.
\newblock Market fundamentals, competition and natural-gas prices.
\newblock {\em Energy policy}, 94:480--491, 2016.

\bibitem{jenkins2015dynamic}
Sandra Jenkins et~al.
\newblock A dynamic model of the combined electricity and natural gas markets.
\newblock In {\em Power \& Energy Society Innovative Smart Grid Technologies Conference (ISGT)}, pages 1--5. IEEE, 2015.

\bibitem{spiecker2011analyzing}
Stephan Spiecker.
\newblock Analyzing market power in a multistage and multiarea electricity and natural gas system.
\newblock In {\em 8th International Conf. on the European Energy Market}, pages 313--320. IEEE, 2011.

\bibitem{spiecker2013modeling}
Stephan Spiecker.
\newblock Modeling market power by natural gas producers and its impact on the power system.
\newblock {\em IEEE Trans. Power Sys.}, 28(4):3737--3746, 2013.

\bibitem{morey2025winners}
Danielle~F Morey, Michelle Fischer, and Ramteen Sioshansi.
\newblock Winners and losers from vertical integration between natural-gas and electricity markets.
\newblock {\em The Energy Journal}, page 01956574251324175, 2025.

\bibitem{creti2016natural}
Anna Cret{\`\i} and Federico Pontoni.
\newblock Natural gas balancing, storage, and flexibility in europe: Assessing the recent literature.
\newblock {\em Current Sustainable/Renewable Energy Reports}, 3(1):18--22, 2016.

\bibitem{TGP_EntireTariff}
{Tennessee Gas Pipeline Company, L.L.C.}
\newblock Currently effective rates.
\newblock https://pipeline2.kindermorgan.com/Tariff/ SubIndex.aspx?code=TGP\&category=CER.

\bibitem{keyaerts2011gas}
Nico Keyaerts, Michelle Hallack, et~al.
\newblock Gas market distorting effects of imbalanced gas balancing rules: Inefficient regulation of pipeline flexibility.
\newblock {\em Energy Policy}, 39(2):865--876, 2011.

\bibitem{szolnoki2017monitoring}
P{\'a}lma Szolnoki.
\newblock {\em Monitoring Natural Gas Balancing Markets}.
\newblock PhD thesis, Corvinus University of Budapest, 2017.

\bibitem{graf2021market}
Christoph Graf, Emilio La~Pera, Federico Quaglia, and Frank~A Wolak.
\newblock Market power mitigation mechanisms for wholesale electricity markets: Status quo and challenges.
\newblock {\em Work. Pap. Stanf. Univ}, 2021.

\bibitem{CAISO_mm}
CAISO.
\newblock Caiso market monitoring.
\newblock https://www.caiso.com/market-operations/market-monitoring.

\bibitem{pjm2025som}
{Monitoring Analytics, LLC}.
\newblock {State of the Market Report for {PJM}, January through September 2025}.

\bibitem{NYISO2023SOMreport}
David~B. Patton et~al.
\newblock {State of the Market Report for the New York ISO Markets}, 2023.

\bibitem{MISO_MPB_Energy_Market}
MISO.
\newblock Energy and operating reserve markets.
\newblock {\em Business Practices Manual Energy and Operating Reserve Markets}, 2024.

\bibitem{ercot2025mitigation}
ERCOT.
\newblock {1255NPRR: Introduction of Mitigation of ESRs 100224}, 2024.
\newblock https://www.ercot.com/mktrules/issues/NPRR1255.

\bibitem{goldman2004review}
Charles Goldman, Bernie~C. Lesieutre, and Emily Bartholomew.
\newblock A review of market monitoring activities at {US} independent system operators.
\newblock {\em Lawrence Berkeley National Lab.}, (LBNL-53975), 2004.

\bibitem{garcia2007international}
Jos{\'e}~A Garc{\'\i}a and James~D Reitzes.
\newblock International perspectives on electricity market monitoring and market power mitigation.
\newblock {\em Review of Network Economics}, 6(3), 2007.

\bibitem{rahimi2003effective}
A.~F. Rahimi et~al.
\newblock Effective market monitoring in deregulated electricity markets.
\newblock {\em IEEE Trans. Power Sys.}, 18(2):486--493, 2003.

\bibitem{adib2008market}
Parviz Adib and David Hurlbut.
\newblock Market power and market monitoring.
\newblock In {\em Competitive Electricity Markets}, pages 267--296. Elsevier, 2008.

\bibitem{peterson2001best}
Paul Peterson et~al.
\newblock Best practices in market monitoring.
\newblock {\em A report prepared for Maryland Office of People’s Counsel}, 2001.

\bibitem{sun2026negative}
Qingkai Sun et~al.
\newblock Negative electricity prices in electricity markets: causes, impacts, and response strategies.
\newblock {\em Global Energy Interconnection}, 2026.

\bibitem{barbour2014can}
Edward Barbour et~al.
\newblock Can negative electricity prices encourage inefficient electrical energy storage devices?
\newblock {\em Internat. J. Environmental Studies}, 71(6):862--876, 2014.

\bibitem{stiewe2025cross}
Clemens Stiewe et~al.
\newblock Cross-border cannibalization: Spillover effects of wind and solar energy on interconnected european electricity markets.
\newblock {\em Energy Economics}, 143:108251, 2025.

\bibitem{twomey2005review}
Paul Twomey.
\newblock A review of the monitoring of market power: The possible roles of tsos in monitoring for market power issues in congested transmission systems.
\newblock 2005.

\bibitem{caramanis2007optimal}
Michael~C Caramanis, Roger~E Bohn, and Fred~C Schweppe.
\newblock Optimal spot pricing: Practice and theory.
\newblock {\em IEEE Transactions on Power Apparatus and Systems}, (9):3234--3245, 2007.

\bibitem{rios2015optimization}
Roger~Z. R{\'\i}os-Mercado and Conrado Borraz-S{\'a}nchez.
\newblock Optimization problems in natural gas transportation systems: A state-of-the-art review.
\newblock {\em Applied Energy}, 147:536--555, 2015.

\bibitem{rudkevich2017locational}
Aleksandr~M. Rudkevich et~al.
\newblock {Locational Marginal Pricing of Natural Gas subject to Engineering Constraints}.
\newblock In {\em 50th Hawaii International Conference on System Sciences}, pages 3092--3101, 2017.

\bibitem{schmidt2017gaslib}
Martin Schmidt et~al.
\newblock Gaslib—a library of gas network instances.
\newblock {\em Data}, 2(4):40, 2017.

\bibitem{zimmerman1997matpower}
Ray~D. Zimmerman et~al.
\newblock Matpower.
\newblock {\em PSERC.[Online]. Software Available at: http://www.pserc.cornell.edu/matpower}, 1997.

\end{thebibliography}

\end{document}